\def\final{1}
\def\lip{1}
\title{Making Existential-Unforgeable Signatures Strongly Unforgeable in the Quantum Random-Oracle Model}
\titlerunning{EU to SU in QRO} %optional, in case that the title is too long; the running title should fit into the top page column
\author[1]{Edward Eaton}
\author[1,2]{Fang Song}
\affil[1]{Department of Combinatorics \& Optimization, University of Waterloo\\
  \texttt{\{eeaton,fang.song\}@uwaterloo.ca}}
\affil[2]{Institute for Quantum Computing, University of Waterloo\\
 %\texttt{fang.song@uwaterloo.ca}
 }
\authorrunning{E. Eaton and F. Song} %mandatory. First: Use abbreviated first/middle names. Second (only in severe cases): Use first author plus 'et. al.'
\subjclass{E.3 Public key cryptosystems}
\keywords{digital signatures, strongly unforgeable, quantum random-oracle, lattices}% mandatory: Please provide 1-5 keywords
\newenvironment{ncprot}[2]{% protocol without caption
%\vspace{-2ex}
\begin{figure}[!ht]
\begin{center}
   \begin{tabular}{|ll|}
   \hline
    \hspace{.1ex} \begin{minipage}{.9\linewidth}\vspace{0.5ex}
       {\begin{center} {\bf #1} {#2} \end{center}}\vspace{-3ex}%#3
%\begin{itemize}\setlength{\itemsep}{0mm} \setlength{\parskip}{0pt}
% \setlength{\parsep}{0pt}
%\vspace{-1ex}%\vspace{-3ex}
       }{%
 %      \end{itemize}
       \vspace{-2ex}
       \smallskip
     \end{minipage} & \\
     \hline
   \end{tabular}
%\vspace*{1ex}
   \end{center}
   \vspace{-3ex}
\end{figure}
}
\newtheorem{theorem}{Theorem}
\newtheorem{claim}{Claim}
\newtheorem{lemma}{Lemma}
\newtheorem{definition}{Definition}
\newenvironment{ncprot}[2]{% protocol without caption
%\vspace{-2ex}
\begin{figure}[!ht]
\begin{center}
   \begin{tabular}{|ll|}
   \hline
     \hspace{.1ex}\begin{minipage}{.97\linewidth}\vspace{0.5ex}
       {\begin{center} {\bf #1} {#2} \end{center}}\vspace{-2ex}%#3
%\begin{itemize}\setlength{\itemsep}{0mm} \setlength{\parskip}{0pt}
% \setlength{\parsep}{0pt}
%\vspace{-1ex}%\vspace{-3ex}
       }{%
 %      \end{itemize}
       \vspace{-2ex}
       \smallskip
     \end{minipage}& \\
     \hline
   \end{tabular}
%\vspace*{1ex}
   \end{center}
   \vspace{-2ex}
\end{figure}
}
\newcommand\mypar[1]{\noindent \textbf{#1}.} 
\newcommand\fsnote[1]{}
\newcommand\tenote[1]{}
\newcommand\fsnote[1]{[{\color{blue}Fang's note}: #1]}
\newcommand\tenote[1]{[{\color{blue}Ted's note}: #1]} % use this to leave remarks
\newcommand\fstd[1]{{\color{red} Next}: #1} % sf's to-do list
\newcommand\ket[1]{|#1\rangle}
\def\calO{\mathcal{O}}
\def\calR{\mathcal{R}}
\def\calH{\mathcal{H}}
\def\ws{\text{\sf WS}}
\def\wso{\text{\sf WS}^\calO}
\def\calA{\mathcal{A}}
\def\calB{\mathcal{B}}
\def\calC{\mathcal{C}}
\def\sam{\text{\sf Samp}}
\def\veps{\varepsilon}
\def\adv{\text{\sc ADV}}
\def\negl{\text{\sf negl}}
\def\hyd{\text{\sf Hyd}}
\def\eucma{\text{\sf eu-acma}}
\def\sucma{\text{\sf su-acma}}
\def\too{\text{\sf TOO}}
\def\evt{\text{\sf EVT}}
\begin{document}

\ifnum\final=0

\textbf{\large{To-Do (Future)}}
\begin{itemize}
  \item @Fang inquire BDF13: SU. 
  \item @all non-standard Chameleon property, i.e., sampling $r\in_R \calR$, is it also needed in classical proof? any other hash functions having this property?
  \item @Ted [TOO] journal: chameleon hash based? 
  \item @all quantum-safe SU summary. Plain model: [Ruc10], [BZ13] Constr.3.10 Strong EURMA. QRO: [BZ13](const.3.12 UURMA[can be obtained from PSF+TDP?]; Constr. 3.18 Det. GPV sig needs no rand?), [BDF13], [Unr15].   
\end{itemize}

\newpage
\fi

\ifnum\lip=0
\title{Making Existentially Unforgeable Signatures Strongly Unforgeable in the Quantum Random-Oracle Model}

\author[1]{Edward Eaton, {eeaton@uwaterloo.ca}}
\author[1,2]{\authorcr Fang Song, fang.song@uwaterloo.ca}
\affil[1]{Department of Combinatorics \& Optimization, University of Waterloo}
\affil[2]{Institute for Quantum Computing, University of Waterloo}
\date{}
\fi

\maketitle

%\section*{Agenda}

\begin{abstract}
Strongly unforgeable signature schemes provide a more stringent security guarantee than the standard existential unforgeability. It requires that not only forging a signature on a new message is hard, it is infeasible as well to produce a new signature on a message for which the adversary has seen valid signatures before. Strongly unforgeable signatures are useful both in practice and as a building block in many cryptographic constructions. 

This work investigates a generic transformation that compiles any existential-unforgeable scheme into a strongly unforgeable one, which was proposed by Teranishi et al.~\cite{TOO06} and was proven in the classical random-oracle model. Our main contribution is showing that the transformation also works against \emph{quantum} adversaries in the \emph{quantum} random-oracle model. We develop proof techniques such as adaptively programming a quantum random-oracle in a new setting, which could be of independent interest. Applying the transformation to an existential-unforgeable signature scheme due to Cash et al.~\cite{CHKP12}, which can be shown to be quantum-secure assuming certain lattice problems are hard for quantum computers, we get an efficient quantum-secure strongly unforgeable signature scheme in the quantum random-oracle model. 
\end{abstract}

%=================%
\section{Introduction}
\label{sec:intro}

Digital signature is a fundamental primitive in modern cryptography and has numerous applications. In a signature scheme, a signer uses his/her secret key to generate a signature on a message. Anyone who knows the corresponding public key can verify the integrity of the message and that it comes from the genuine signer. A standard security notion for digital signatures is called \emph{existential-unforgeable} under \emph{adaptive chosen-message-attacks} ($\eucma$ in short). Basically it means that an adversary, without knowing the secret key of a user, cannot forge a valid signature on a \emph{new} message. This should hold even if the adversary has seen a few signatures generated by the honest user on messages adaptively chosen by the adversary. Another important security notion, stronger than $\eucma$, is called \emph{strongly} existential-unforgeable ($\sucma$). Here, in addition to $\eucma$, it should be infeasible to forge a \emph{new} signature on a previously signed message. Aside from applications in some practical scenarios~\cite{Ruc10}, $\sucma$ signatures turn out to be a very powerful tool in other cryptographic constructions. For instance they are used in transforming encryption schemes that are secure under chosen-plain-text attacks into secure schemes under \emph{chosen-ciphertext-attacks}~\cite{DDN00,BCHK06}; and in constructing identity-based blind signatures~\cite{GHK06} and group signature schemes~\cite{ACJT00,BBS04}. 

Strongly unforgeable signature schemes can be obtained from existential-unforgeable ones via generic transformations~\cite{SPW06,HWZ07,TOO06}. The transformation in~\cite{TOO06} (referred to as $\too$ hereafter) is particularly interesting because it only needs a mild computational assumption and the overhead it causes to the efficiency is small. This work studies this transformation in the quantum setting, where adversaries have the power of processing quantum information. We want to ask: 
%\begin{center}
\emph{does $\too$ transformation still hold in the presence of quantum adversaries, and furthermore can we obtain quantum-secure $\sucma$ signatures systematically?} 
%\end{center}

There is no quick answer to this question. In general a classically secure cryptographic construction can completely fall apart against quantum adversaries for at least two reasons. First of all, quantum computers can solve some problems efficiently which are otherwise believed hard classically. This breaks the computational assumption in many constructions. For example, many existing $\eucma$ signature schemes, the starting point of the transformation, are based on factoring or discrete logarithm. The $\too$ transformation itself also uses the discrete logarithm problem. They are immediately broken by Shor's quantum algorithms~\cite{Shor97}. Naturally we may want to switch to \emph{quantum-safe} assumptions. For example, we assume certain lattice problems are hard even against quantum algorithms and then construct crypto-systems based on them~\cite{MR09,BBD09}. However, this does not fix everything immediately due to another reason, which is more subtle. Security of a construction is established by a security reduction, which is a proof by contradiction showing that if a scheme is not secure, then one can break a computational assumption. Unfortunately, as pointed out by a line of works (e.g.,~\cite{Wat09,HSS11,Unr12pok,Son14}), classical security reductions may not hold in the presence of quantum adversaries due to technical difficulties such as \emph{quantum rewinding}.  

There is an additional complication, which turns out to be the main difficulty towards making the $\too$ transformation go through in the quantum setting. Classically, $\too$ is proven in the random-oracle model (RO), where a hash function is treated as a truly random function and all users evaluate the hash function by querying the random function%\footnote{In contrast, we call the ordinary scenario, under which the other transformations are designed, the plain model.}
. However once an adversary becomes quantum, we should naturally allow the queries to be in quantum superposition. This is formalized as the quantum random-oracle model (QRO)~\cite{BDFLSZ11}. The bad news is that many classical tricks in RO become difficult to apply in QRO, if not entirely impossible. For starters, classically it is trivial to answer random-oracle queries on-the-fly by generating fresh random value for new queries while maintaining a table to keep consistency. It is not obvious that some similar trick can handle quantum superposition queries. There have been a host of works in recent years developing proof techniques in QRO~\cite{Zha12a,Unr14_rev,Unr14_pos}, but many classical techniques are still missing their counterparts in QRO. 

\mypar{Our Contributions} Our main result is showing that the $\too$ transformation still works against quantum adversaries in the quantum random-oracle model under reasonable computational assumptions. Specifically, we first make a simple observation that classically the $\too$ transformation actually holds using any (generic) chameleon hash function, rather than the specific instantiation by the discrete log problem.  As our central contribution, we prove that once the chameleon hash function and the $\eucma$ signature scheme are both quantum-safe, then $\too$ transformation will produce a quantum-safe $\sucma$ signature scheme in the quantum random-oracle model. In our proof, we develop a technique that allows for adaptively programming a quantum random-oracle in a new setting. We hope this technical can find applications and extensions elsewhere. 

%In our proof, we develop a new technique that allows for adaptively programming a quantum random-oracle. Namely we want to change the function value at various inputs that the adversary has partial control (e.g., the prefix of these inputs are chosen by the adversary). Intuitively this is possible when these inputs still have sufficient uncertainty to the adversary. At a high level, similar techniques exist previously in the cases that the inputs are \emph{information-theoretically} undetermined, for example they have a high min-entropy~\cite{Unr14b,Unr14c}.  In contrast, in our case these inputs are \emph{computationally} hard to decide by the adversary. Namely, only if some computational assumption is broken, these inputs remain uncertain to the adversary and hence we have freedom to choose the answers on these inputs. We hope this technique can find applications and generalization elsewhere. 

Once we have the transformation ready, we demonstrate instantiations of the building blocks to obtain concrete quantum-safe $\sucma$ schemes. Using tools from~\cite{Son14}, it is easy to verify that the bonsai-tree signature scheme by Cash et al.~\cite{CHKP12} is $\eucma$ against quantum adversaries assuming some lattice problem is quantum-safe\footnote{Actually, we observe a tighter security reduction so that a slightly weaker  assumption on the lattice problem is sufficient.}. In~\cite{CHKP12}, a chameleon hash function was also proposed based on the same computational assumptions, which is easy to check that it is quantum-safe as well. Putting these pieces together, we can get a quantum-safe $\sucma$ scheme. 

%=================%
%\subsection{Overview of Our Proof Techniques in QRO}
%\label{ssec:overview}
%=================%

\mypar{Overview of Our Proof Techniques in QRO} As we mentioned earlier, many proof techniques in classical RO do not immediately go through in the QRO model. Roughly speaking, the classical proof for the $\too$ transformation relies on two features in the classical RO model: the history of queries that an adversary makes to the RO can the recorded, and at various steps one can assign a fresh random value on an input, since the response at an input needs not to be determined before being queried. Both become difficult in the quantum setting. Copying quantum superposition queries which are unknown quantum states is generally impossible, and apparently a single quantum query of the form $\sum|x,y\rangle \mapsto \sum |x,\calO(x)\oplus y\rangle$ would ``see'' the function values at all inputs. It is hence unclear how to change $\calO(x)$ later without being caught. 

The first issue turns out to be  non-essential. The purpose of keeping the RO queries is to make sure some special input $x^*$ has not been queried by the adversary. Otherwise $x^*$ can be used to break some assumption. In the quantum setting, we can just pick one of the queries at random and measure it. If the overall amplitude that adversary intends to query at $x^*$ is high, the probability we recover $x^*$ is only reduced by essentially a poly-factor (the number of the adversary's RO queries).

We then come up with a technique for adaptively programming a QRO in a new setting. Namely we want to change the function value at various inputs that the adversary has partial control (e.g., the prefix of these inputs are chosen by the adversary). Intuitively this is possible when these inputs still have sufficient uncertainty to the adversary. There exist techniques previously when these input strings are \emph{information-theoretically} undetermined, possessing a high min-entropy for example~\cite{Unr14_pos,Unr15}.  In contrast, in our case these inputs are \emph{computationally} difficult to decide by the adversary. Namely, these inputs remain uncertain to the adversary unless some computational assumption is broken. We show that this is already sufficient freedom for programming the answers on these inputs. Being a little more specific, % we formalize a hard search problem from the underlying computational assumption (e.g., forging a signature is hard).
we show that the computational assumption implies \emph{indistinguishability} of two functions which a distinguisher can have quantum access to: one is the all-zero function, and the other marks a set of strings that could be used to break the computational assumption. This may be interpreted as a computational analogue of the Grover search lower bound in quantum query complexity. This enables us to program a quantum random-oracle adaptively. Basically, the random-oracle embeds one of the preceding functions, and programming the random-oracle roughly amounts to switching between the two functions. Since the two functions are indistinguishable, any efficient quantum algorithm querying the random-oracle cannot notice whether we have re-programmed the quantum random-oracle. From a technical point of view, these claims may not sound very surprising. Nonetheless, we view them as an interesting conceptual shift, which is similar in spirit to~\cite{CDMS04} where the authors showed that \emph{computational} constraints can force measurement on a quantum state and cause collapse to particular subspaces. Our techniques also complements existing ones that are of information-theoretical flavor. 

\mypar{Related Works} Boneh and Zhandry~\cite{BZ13b} considered a stronger type of quantum attacks on signature schemes where an adversary can query a signing oracle in superposition. They proposed general transformations which  amplify schemes that are secure against ordinary quantum adversaries (i.e., those who only issue classical signing query as we consider in this work), to achieve security under attacks with superposition signing queries. In contrast, the transformation in our work only considers ordinary quantum adversaries, but tries to amplify in terms of the type of forgeries that an adversary can produce. Lyubashevsky~\cite{Lyu09,Lyu12} applied the Fiat-Shamir paradigm to construct lattice-based $\sucma$ signatures in the random-oracle model from identification schemes. However whether these schemes are quantum-secure is unclear, because proving quantum security of the identification schemes faces the difficulty of quantum rewinding. More importantly, there is negative evidence that Fiat-Shamir paradigm may not hold in general in the QRO model~\cite{DFG13,ARU14}. Dagdelen et al.~\cite{DFG13} showed that a variant of Fiat-Shamir works in the QRO  model, but only for a very special form of identification schemes. In a recent work by Unruh~\cite{Unr15}, a general transformation is proposed, which can produce (quantum-safe) strongly-unforgeable signatures in the QRO model from general $\Sigma$-protocols. However the overhead is much larger than the Fiat-Shamir transformation, and the resulting signature schemes are less efficient than what can be obtained from our work.  
\fsnote{Ask the authors! Dagdelen et al.~\cite{DFG13} proved that a specific variant of~\cite{Lyu12} can be made existential-unforgeable via the Fiat-Shamir transformation in the QRO model. However we do not know if strong unforgeability can be achieved.}
We remark that there is a generic Merkle-tree approach that produces $\sucma$ schemes out of $\sucma$ one-time signature schemes, which should still hold against quantum adversaries. Therefore in principle, lattice-based one-time signatures, as in~\cite{LM08}, would suffice for full-fledged quantum-safe $\sucma$ schemes. However the resulting scheme is usually far less efficient and costly to manage (because it is typically stateful). 

%=================%
\section{Preliminary}
\label{sec:prelim}
%=================%

We review necessary definitions and cryptographic tools in this section. 

\begin{definition}[Signature Scheme]
A \textbf{signature scheme} is composed of a triplet of probabilistic polynomial-time algorithms $(G,S,V)$, satisfying the following:

\begin{itemize}
  \item $G$ is the key generation algorithm. On running, it produces a pair, $(pk, sk)$. $pk$ is the public key, or verification key, while $sk$ is the secret key, or signing key.
  \item $S$ is the signing algorithm. Upon input of a message $M$ from a message space $\mathcal{M}$, as well as a secret key $sk$, it produces a signature $\sigma$ on that message.
  \item $V$ is the verification algorithm. It takes in a message $M$, a signature $\sigma$, and a public key $pk$, and will output either `accept' or `reject'.
\end{itemize}

Signature schemes must satisfy the \textbf{correctness requirement}, which is that for any $(pk, sk)$ generated by $G$, and any $M \in \mathcal{M}$, if $\sigma \leftarrow S(M, sk)$ then $V(M, \sigma, pk) =$ `accept'.
\end{definition}

A standard security notion for signature schemes is \textbf{existential unforgeability under adaptive chosen message attack} ($\eucma$).

\begin{definition}[Existential Unforgeability under Adaptive Chosen Message Attack]

Consider the following game between a challenger $\mathcal{C}$ and a forger $\mathcal{A}$:
\begin{itemize}
\item $\mathcal{C}$ runs $G$, and send the resulting $pk$ to $\mathcal{A}$.
\item $\mathcal{A}$ sends up to $q$ messages $M_1, M_2, ... , M_q$ to $\mathcal{C}$, one at a time. For each message $\mathcal{C}$ receives, she sends back $\sigma_i = S( M_i , sk )$ to $\mathcal{A}$.
\item $\mathcal{A}$ finally outputs a pair $(M^* , \sigma^* )$ to $\mathcal{C}$. We call this a valid forgery if $M^* \neq M_i \forall i \in \{1, ... , q \}$ and $V(M^*, \sigma^*, pk) =$ `accept'.
\end{itemize}

If, for polynomially bounded $q$, it is computationally infeasible for $\mathcal{A}$ to come up with a valid forgery, the scheme is said to be existentially unforgeable under adaptive chosen message attack.
\end{definition}

\begin{definition}[Strong Unforgeability under Adaptive Chosen Message Attack]

\textbf{Strong unforgeability under Adaptive Chosen Message attack}, or \textbf{su-acma}, is defined in the same way as eu-acma, except that the pair $(M^*, \sigma^* )$ that $\mathcal{A}$ eventually submits must only require that $(M^*, \sigma^*) \neq (M_i , \sigma_i)$ for all $i$, instead of the requirement that $M^* \neq M_i$. This change means that the forgery $\mathcal{A}$ submits may either be a new message, or may be a message that $\mathcal{C}$ has already signed, but with a new signature. \end{definition}
Note that by allowing $\mathcal{A}$ to submit more kinds of forgeries, if it is still computationally infeasible for $\mathcal{A}$ to succeed, then we know that this type of forgery also cannot be created, making the scheme in a sense stronger. 

\mypar{Chameleon hash functions} Chameleon hash functions were introduced by Krawczyk and Rabin~\cite{KR00}. We need a slight generalization proposed in~\cite{CHKP12}. A family $\calH$ of chameleon hash function is a collection of functions $h$ that takes in a message $m$ from a message space $\mathcal{M}$ and some randomness $r$ from a randomness space $\mathcal{R}$, and outputs to a range $\mathcal{Y}$, ie, $h: \mathcal{M} \times \mathcal{R} \rightarrow \mathcal{Y}$. The randomness space is associated with some efficiently sampleable distribution. There are three properties we need for a family of chameleon hash functions:
\begin{itemize}
	\item (Chameleon property) We require an algorithm $HG$ that samples a hash function $h\in \calH$ together with trapdoor information $td$ satisfying that for any $m \in \mathcal{M}$ and $y \in \mathcal{Y}$, it is possible to efficiently sample $r \gets h^{-1}_{td}(m,y)$ under the distribution associated with $\mathcal{R}$ such that $h(m, r) = y$. \fsnote{Caution: this may be crucial}
	\item (Uniformity) For $h \gets \calH$ and $r\gets \mathcal{R}$, $(h, h(m,r))$ is uniform over $(\mathcal{H}, \mathcal{Y})$ up to negligible statistical distance. 
	\item (Collision resistance) For a hash function $h \gets \calH$, it is computationally infeasible for an adversary to find $(m, r), (m', r')$, with $(m, r) \neq (m', r')$ such that $h(m, r) = h(m', r')$.

\end{itemize} 

%\begin{definition}[Chameleon Hash Function]
%A \textbf{Chameleon Hash Function} is a function, $h$, that takes in a message $m$ from a message space $\mathcal{M}$ and some randomness $r$ from a randomness space $\mathcal{R}$, and outputs to a range $\mathcal{Y}$, ie, $h: \mathcal{M} \times \mathcal{R} \rightarrow \mathcal{Y}$. As well

%This function must also be able to be generated with some trapdoor information, $td$, such that with the trapdoor information, for any $m \in \mathcal{M}$ and $y \in \mathcal{Y}$, it is possible to efficiently compute a $r \in \mathcal{R}$ such that $h(m, r) = y$.
%\end{definition}

%\begin{definition}[Chameleon Hash Scheme]
%We consider the concept of a chameleon hash scheme. That is, we require an algorithm, $HG$, that upon running will provide a description of a chameleon hash function $h$ from an output space $\mathcal{H}$, with corresponding trapdoor information $td$.

%As well, for any $m \in \mathcal{M}$, for $h \leftarrow HG$, $r \in \mathcal{R}$, we must have that $(h, h(m,r))$ is uniform over $(\mathcal{H}, \mathcal{Y})$.

%The hash functions from this scheme must also satisfy the collision resistance property. That is, for a hash function $h$, it is computationally infeasible for an adversary to find $(m, r), (m', r')$, with $(m, r) \neq (m', r')$ such that $h(m, r) = h(m', r')$.
%\end{definition}

\mypar{Quantum Random-Oracle Model} The random oracle model is a technique used in cryptographic proofs. In it, Hash functions are replaced with random oracles. An adversary is  given access to query this random oracle by providing an input, $x$, and is returned the response, $\mathcal{O}(x)$. These random oracles exist to replace hash functions in our proof. When we examine the proof in the context of quantum computers, Boneh et al.~\cite{BDFLSZ11} have pointed out that since superposition queries to hash functions are possible, to truly capture this in a model allowing quantum computers, we must allow superposition queries to the random oracle. So we will allow superpositions of queries to our random oracle, $\sum a_x | x, y \rangle$, which will be responded to with a superposition of answers, $\sum a_x | x, y \oplus \calO(x) \rangle$.

A cryptographic scheme is said to be \emph{quantum-safe} (or quantum-secure) if the security conditions still hold once the adversaries become efficient quantum computers. We do not go into more precise definitions. See for example~\cite{HSS11} for details.

%=================%
\section{Getting SU from EU in QRO}
\label{sec:intro}
%=================%

In this section we prove our main theorem. 

\begin{theorem}
There exists a generic conversion that takes an quantum-safe $\eucma$ signature scheme $\Sigma = ( G, S, V)$ and a family of quantum-safe collision-resistant chameleon hash functions $\calH$ and produces a quantum-safe $\sucma$ signature scheme $\Sigma ' = (G', S', V' )$ in the quantum random-oracle model. 
\label{thm:eu2su}
\end{theorem}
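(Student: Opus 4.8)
The plan is to give the $\too$-style construction explicitly and then prove security by a hybrid argument, with the programming of the quantum random-oracle as the technical heart. Recall the construction: $G'$ runs $G$ to get $(pk, sk)$ and runs $HG$ to sample a chameleon hash $h \in \calH$ with trapdoor $td$; it discards $td$ and sets $pk' = (pk, h)$, $sk' = sk$. The random-oracle $\calO$ will map $\calY$ (the range of $h$) into the randomness used by the $\eucma$ signing algorithm. To sign $M$, the signer picks fresh randomness $s \gets \mathcal{R}$, computes $a = h(M, s)$, then computes $\sigma = S(a, sk; \calO(a))$, i.e.\ runs the underlying signer on the ``virtual message'' $a$ using $\calO(a)$ as its coins, and outputs $\sigma' = (s, \sigma)$. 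Verification recomputes $a = h(M,s)$ and checks $V(a, \sigma, pk)$. I would first observe (the ``simple observation'' promised in the introduction) that this works for any chameleon hash, then prove Theorem~\ref{thm:eu2su}.

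The reduction: suppose a quantum adversary $\calA$ wins the $\sucma$ game against $\Sigma'$ with non-negligible probability, making $q$ (quantum) random-oracle queries and $q_s$ classical signing queries, and outputting $(M^*, \sigma'^* = (s^*, \sigma^*))$ with $(M^*, \sigma'^*) \neq (M_i, \sigma'_i)$ for all $i$. Let $a^* = h(M^*, s^*)$ and $a_i = h(M_i, s_i)$. There are two cases. \textbf{Case 1:} $a^* = a_i$ for some $i$ but $(M^*, s^*) \neq (M_i, s_i)$ --- then $((M^*, s^*), (M_i, s_i))$ is a collision for $h$, contradicting collision resistance. (A subtlety: the $s_i$ are produced honestly by the signer, but the argument still goes through; one bounds the probability of this case directly by the collision-resistance advantage of a straightforward reduction that plays the role of the signer.) \textbf{Case 2:} $a^* \neq a_i$ for all $i$, so $\sigma^*$ is a valid signature under $\Sigma$ on a message $a^*$ never signed by the underlying signer --- this should break $\eucma$ security of $\Sigma$. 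The work is in making Case 2 into a clean black-box reduction: the $\eucma$ reduction $\calB$ gets $pk$, samples $h$ itself (without $td$), and must answer $\calA$'s signing queries. To answer a signing query on $M_i$, $\calB$ would like to pick $a_i$, query its own $\eucma$ signing oracle on $a_i$ to get $\sigma_i$ with \emph{some} coins, and then \emph{program} $\calO(a_i)$ to equal those coins and return $s_i$ with $h(M_i, s_i) = a_i$. Here is where the chameleon property and QRO-programming both enter: $\calB$ does \emph{not} have $td$, so it cannot invert $h$; instead, using Uniformity, $\calB$ picks $s_i \gets \mathcal{R}$ honestly, sets $a_i = h(M_i, s_i)$, and must program $\calO(a_i)$ after the fact to be the (known) randomness of the signature it receives. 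Since $\calB$ runs the real signer's randomness distribution, it actually \emph{does} know the coins if the $\eucma$ oracle is treated as randomized-by-$\calB$; the cleaner route is to have $\calB$ use the randomized $\eucma$ oracle and reprogram $\calO(a_i) \mapsto$ the coins that the underlying scheme is now pinned to, which requires that $a_i$ was not noticeably queried by $\calA$ before the $i$-th signing query.

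The main obstacle --- and the place I expect the real argument to live --- is justifying the adaptive reprogramming of $\calO$ at the points $a_i$ and at $a^*$ against a quantum adversary who has superposition access to $\calO$. Classically one argues $a_i$ is fresh because the signer's randomness $s_i$ is unpredictable; quantumly, $\calA$'s single query ``touches'' all inputs, so I would invoke the paper's new technique: the computational hardness (collision resistance / the unpredictability of $h(M_i, s_i)$ for random $s_i$, via Uniformity) implies that the oracle that is ``all-zero on the programmed points'' is quantum-indistinguishable from the oracle that marks those points, a computational analogue of the Grover lower bound. Concretely I would (i) define a sequence of hybrids $H_0, \dots, H_{q_s}$ where $H_j$ programs the first $j$ signing points via the $\eucma$ oracle and answers the rest honestly with the discarded-$td$ simulation, and likewise handle $a^*$; (ii) bound $|\Pr[H_{j-1}] - \Pr[H_j]|$ by the advantage of a distinguisher against the marked-vs-zero-function pair, which in turn reduces to breaking collision resistance or guessing a high-entropy chameleon output; (iii) in the final hybrid, extract from Case 1 a collision and from Case 2 an $\eucma$ forgery on $a^*$, using the measure-a-random-query trick to ensure $a^*$ itself was not heavily queried so the programming at $a^*$ is also sound. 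Summing the $O(q_s)$ hybrid gaps, each negligible, together with the collision-resistance and $\eucma$ bounds, gives the theorem.
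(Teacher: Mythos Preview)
Your proposal rests on a construction that is \emph{not} the $\too$ transformation, and the mismatch is fatal. In the paper the signer keeps the chameleon trapdoor $td$ as part of $sk'$; to sign $M$ it samples a random $C$, computes $\sigma = S(C,sk)$, queries the random oracle on $M\|\sigma$ to obtain $m=\calO(M\|\sigma)$, and uses $td$ to find $r$ with $h(m,r)=C$. The random oracle is applied to $(M,\sigma)$ and its output feeds into the \emph{message} slot of the chameleon hash. In your version the trapdoor is discarded, the RO is applied to $a=h(M,s)$, and its output is used as signing coins. These are structurally different schemes, and yours does not obviously achieve strong unforgeability: your case split omits the event $a^*=a_i$ with $(M^*,s^*)=(M_i,s_i)$ but $\sigma^*\neq\sigma_i$. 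Nothing prevents an adversary from requesting a signature $(s_i,\sigma_i)$ on $M_i$, computing $a_i=h(M_i,s_i)$, and outputting $(M_i,(s_i,\sigma^*))$ for any other $\sigma^*$ that $V$ accepts on $a_i$; since $\Sigma$ is only $\eucma$, such $\sigma^*$ may exist, and it triggers neither your collision case nor your fresh-message case. The point of feeding $\sigma$ into the RO in the real $\too$ construction is exactly to kill this attack: changing $\sigma$ changes the RO input, hence $m$, hence $C$, forcing either a new $C^*$ (EU break) or a chameleon collision.

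This also means you have misplaced where the ``computational'' QRO-programming enters. In your scheme the programmed points $a_i=h(M_i,s_i)$ are statistically close to uniform by the Uniformity property, so the reprogramming would be information-theoretic in the style of \cite{Unr14_pos,Unr15}, not the new technique. In the paper the programmed points are $M_i\|\sigma_i$, where $\sigma_i$ is a signature on a random $C_i$; these are only \emph{computationally} unpredictable, and Lemma~\ref{lemma:ws2pro} (adaptive programming from hardness of witness-search) is invoked with the $\eucma$ security of $\Sigma$ supplying the hardness. Finally, your Case-2 reduction is tangled: you have $\calB$ sample $h$ without $td$, yet the real signer needs $td$; and you propose to program $\calO(a_i)$ to equal the EU oracle's signing coins, but a black-box $\eucma$ oracle does not reveal its coins. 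In the paper's proof, Case~1 (the EU reduction) needs no programming at all because $\calB$ holds $td$ and can sign exactly as the honest signer; Case~2 (the collision reduction) is where $\calB$ holds $sk$ but not $td$, precomputes $(m_i,r_i,C_i,\sigma_i)$, and programs $\calO(M_i\|\sigma_i)\gets m_i$, justified by Lemma~\ref{lemma:ws2pro}.
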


%\fsnote{I changed it to {theorem} environment}

%=================%
\subsection{The Transformation}
\label{subsec:trans}
%=================%

We first recall the $\too$ transformation~\cite{TOO06} with a minor change. We use a generic chameleon hash function instead of an instantiation from the discrete log problem.

\begin{itemize}
\item $G'$. On input a security parameter $1^n$, do the following: 
\begin{itemize}
\item Run $G$, obtaining $(pk, sk)$. 
\item Run $HG$ obtaining a chameleon hash function $h$ with trapdoor $td$. 
\item Set $pk' = (pk, h)$ and $sk' = (sk, td)$.
\end{itemize}
\item $S'$. On input of message $M$, do the following:
\begin{itemize}
\item %Choose a random $\tilde{m}$ and $\tilde{r}$ and compute $C = h( \tilde{m}, \tilde{r})$
Sample a random $C$ from the range of $h$. 
%\fsnote{or just sample a random $C$, by the uniformity property of ChHash}
\item Sign $C$ using the signing algorithm $S$, obtaining $\sigma = S(C, sk)$
\item Compute $m = \mathcal{O} (M \|  \sigma)$, where $\mathcal{O}$ is a hash function (to be replaced with a random oracle in the proof).
\item Using the trapdoor information $td$, find an $r$ such that $h(m, r) = C$.
\item Output $\sigma' = ( \sigma, r)$.
\end{itemize}
\item $V'$. On input of a message $M$ and a signature $\sigma' = (\sigma, r)$, do the following:
\begin{itemize}
\item Compute $m = \mathcal{O}(M \| \sigma)$ and $C = h(m, r)$.
\item Output 'Accept' if and only if $V(C, \sigma, pk) =$ 'Accept' (otherwise, output 'Reject').
\end{itemize}
\end{itemize}

The correctness of the algorithm can be seen easily. If $\sigma'$ was a signature generated on $M$ using $S'$, then $C$ will be the same $C$ generated during the running of $S'$, and is precisely what $\sigma$ is a signature for.

%=================%
\subsection{Main Technical Lemma: Adaptively Programming a Quantum RO}
\label{ssec:apqro}
%=================%

To prove the main theorem, we demonstrate a new scenario where we can adaptively program a quantum random-oracle. This extends existing works (e.g~\cite{Unr14_pos,Unr14_rev,Unr15}) from information-theoretical setting to a computational setting, and we believe it is potentially useful elsewhere. We will formalize a probabilistic game which we call \emph{witness-search}. It potentially captures the essence of numerous security definitions for cryptographic schemes (e.g. signatures). Then we show that the (computational) hardness of witness-search allows for adaptively programming a quantum random-oracle. 

Let $\sam$ be an instance-sampling algorithm. On input $1^n$, $\sam$ generates public information $pk$, description of a predicate $P$, and a witness $w$ satisfying $P(pk,w) = 1$. %We call $w$ a witness if $P(pk,w) = 1$. $\sam_1$ is a witness sampling algorithm, which on input $(pk,sk)$ generates a valid witness $w$ according to some distribution. 
Define a {witness-search} game $\ws$ as below.

\begin{ncprot}{Witness-Search Game}{$\ws$}
\begin{enumerate}
	\item Challenger $\calC$ generates $(pk,w,P)\gets\sam(1^n)$. Ignore $w$. Let $W_{pk}: = \{w: P(pk,w) =1\}$ be the collection of valid witnesses. 
	\item $\calA$ receives $pk$ %and makes at most $q$ queries to the random oracle 
	and produces a string $\hat w$ as output. 
	%\item $\calC$ picks $j\in_R[q]$ at random, and measures the $j$th query by $\calA$. Let $z||w$ be the outcome.
	\item We say $\calA$ wins the game if $\hat w\in W_{pk}$. \\
\end{enumerate}
\end{ncprot}
 
We say $\ws(\sam)$ is hard, if for any poly-time $\calA$, $\Pr[\calA \text{ wins}] \leq \negl(n)$. For instance,  $\sam$ could be the KeyGen algorithm of a signature scheme. $pk$ consists of the public key and description of the signature scheme. Predicate $P$ is the verification algorithm and a witness consists of a valid message-signature pair. Security of the signature scheme implies hardness of $\ws(\sam)$. 

\begin{lemma}[Hardness of Witness-Search to Programming QRO]
Let two experiments $E$ and $E'$ be as below. If $\ws$ is hard, then $\adv:= \left| \Pr_{E}[b=1] - \Pr_{E'}[b=1] \right| \leq \negl(n)\, .$
\label{lemma:ws2pro}
\end{lemma}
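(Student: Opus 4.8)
The plan is to reduce the indistinguishability of $E$ and $E'$ to a statement about two quantum-accessible oracles, and then to reduce that in turn to the hardness of $\ws(\sam)$ by a ``measure a random query'' argument. First I would isolate the only difference between the two experiments: in $E'$ the random oracle is reprogrammed on a set $S$ of inputs whose description depends on the witness $w$ (equivalently, on the secret randomness the challenger uses when running $\sam$), while in $E$ the oracle is left untouched. One can then present both experiments uniformly as a single fixed, efficient procedure that is given oracle access to a function $g$ and, whenever it is about to touch a point $x$ that might lie in $S$, consults $g(x)$ to decide whether to answer with the freshly programmed value or with the original value of $\calO$: taking $g = \mathbf{0}$ (the all-zero function) reproduces $E$, and taking $g = \mathbf{1}_S$ (the indicator of the witness-revealing set) reproduces $E'$. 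If the reprogramming points are only revealed adaptively during the interaction, I would first argue, using the uniformity and chameleon properties of $\calH$, that they can be absorbed into a set $S$ that is determined by $\sam$'s output alone, so that the rewriting above is legitimate. With this in hand, $\adv$ is exactly the advantage of some efficient quantum distinguisher $\calD$ that is handed $pk$ together with quantum oracle access to $g\in\{\mathbf{0},\mathbf{1}_S\}$ and must decide which oracle it holds.

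The core step is the following ``computational Grover'' claim: if $\ws(\sam)$ is hard, then $\mathbf{0}$ and $\mathbf{1}_S$ are indistinguishable to any poly-time $\calD$, because the only way to certify membership of a string in $S$ is to exhibit a valid witness. I would establish this with the standard combination of a hybrid over $\calD$'s $q$ oracle queries and the random-query-measurement trick: if $\calD$ distinguishes with advantage $\delta$, then running $\calD$ with the all-zero oracle, halting just before a uniformly random query $i\in\{1,\dots,q\}$, and measuring the query's input register yields, with probability $\Omega(\delta^2/q^2)$, a string $x\in S$; since by construction every element of $S$ efficiently encodes a valid witness, this gives an efficient algorithm producing $\hat w\in W_{pk}$ with non-negligible probability, contradicting the hardness of $\ws(\sam)$. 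Hence $\delta=\negl(n)$. Morally this is the one-way-to-hiding / semi-classical-oracle argument, adapted to the setting where the marked set is only \emph{computationally} hidden rather than having high min-entropy.

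Putting the two reductions together gives $\adv \le \delta = \negl(n)$, which is the claim. I expect the main obstacle to be the bookkeeping around adaptivity: in the intended application to the $\too$ transformation the reprogrammed points are of the form $M_i\|\sigma_i$ produced during the signing interaction, with the prefix $M_i$ chosen by $\calA$, so one must argue carefully that the value $\calO$ would otherwise have taken on those points is computationally unconstrained from $\calA$'s view — and this is precisely the place where producing such a point without the chameleon trapdoor must be shown to amount to finding a collision in $\calH$ or a forgery for $\Sigma$, i.e.\ to winning $\ws$. A secondary, more routine issue is tracking the quantitative loss of the random-query measurement for genuine superposition queries; the quadratic loss in $\delta$ and the linear loss in $q$ are immaterial here since $q$ is polynomial and it suffices to prove $\delta$ negligible.
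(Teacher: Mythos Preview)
Your ``computational Grover'' claim is exactly the paper's Lemma~\ref{lemma:s2ind}, and your proof of it via measuring a uniformly random query is essentially the argument in Appendix~\ref{sec:pfs2ind}. So the auxiliary lemma is right.

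The gap is in your first paragraph, where you assert that $E$ and $E'$ can be written as a single procedure $\calD$ with oracle access to $g\in\{\mathbf{0},\mathbf{1}_S\}$ so that $g=\mathbf{0}$ reproduces $E$ and $g=\mathbf{1}_S$ reproduces $E'$. This does not work as stated. In both $E$ and $E'$ the first-phase adversary $\calA_1$ queries the \emph{same} oracle $\calO$; the two experiments differ only in how $z$ is set and which oracle $\calA_2$ sees, and that difference is localized at the single adaptive point $x\|w$. Hence any $\calD$ encapsulating $(\calA_1,\calA_2)$ would have to consult $g$ only in the $\calA_2$ phase (or when computing $z$), but at that moment $\calD$ already holds $w$ and can trivially tell $\mathbf{0}$ from $\mathbf{1}_S$ by querying $g$ at $x\|w$. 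Conversely, the witness-search extractor you describe must run $\calD$ \emph{without} $w$, and then it cannot simulate $\calA_2$, which is handed $w$ explicitly. Either way the reduction collapses. (Your remark about absorbing adaptivity ``using the uniformity and chameleon properties of $\calH$'' belongs to the application in Theorem~\ref{thm:eu2su}, not to this abstract lemma, and does not address this issue.)

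The paper closes this gap with two intermediate hybrids. In $E_2$ and $E_3$ one replaces $\calA_1$'s oracle by $\bar\calO$, which agrees with $\calO$ except that it is zeroed on the whole set $\{\,\cdot\|w':w'\in W_{pk}\,\}$; the step $E_2\to E_3$ is then \emph{free} (the distributions are identical), because once $\calA_1$'s view is independent of $\calO$ on that set, it does not matter whether $z=\calO(x\|w)$ or $z$ is freshly random. The transitions $E_1\to E_2$ and $E_3\to E_4$ change only $\calA_1$'s oracle, and this is exactly where your computational-Grover lemma applies: the distinguisher issues all of its $g$-queries while simulating $\calA_1$ (no $w$ needed) and is handed $w$ only \emph{afterwards} so it can finish by simulating $\calA_2$. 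This ``receive $w$ after the oracle queries'' feature is essential to the auxiliary lemma and is missing from your formulation.
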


Note that $E'$ differs from $E$ only in that we reprogram the random oracle at some point in $E'$. We defer the proof of this lemma to Appendix~\ref{sec:pfadp}. 

\begin{ncprot}{Experiment}{$E$ }
\begin{enumerate}
	\item  Generate $(pk,w,P)\gets \sam(1^n)$. 
	\item $\calO\gets \mathcal{F}$ is drawn uniformly at random from the collection of all functions $\mathcal{F}$.
	\item $\calA_1$ receives $pk$ as input and makes at most $q_1$ queries to $\calO$. $\calA_1$ produces a classical string $x$. 
	\item Set $z:=\calO(x\|w)$.
	\item $\calA_2$ gets $(x,w,z)$ and may access the final state of $\calA_1$. $\calA_2$ makes at most $q_2$ queries to $\calO$. It outputs $b \in \{0,1\}$ at the end. \\
\end{enumerate}
\end{ncprot}

\begin{ncprot}{Experiment}{$E'$}
\begin{enumerate}
	\item  Generate $(pk,w,P)\gets \sam(1^n)$. 
	\item $\calO\gets \mathcal{F}$ is drawn uniformly at random from the collection of all functions $\mathcal{F}$.
	\item $\calA_1$ makes at most $q_1$ queries to $\calO$. It produces a classical string $x$. 
	\item Pick a random $z\in_R \text{\sf Range}(\calO)$. Reprogram $\calO$ to $\calO'$: $\calO'(y) = \calO(y)$ except that $\calO'(x\|w) = z$. 
	\item $\calA_2$ gets $(x,w,z)$ and may access the final state of $\calA_1$. $\calA_2$ makes at most $q_2$ queries to $\calO'$. It outputs $b\in \{0,1\}$ at the end. \\
\end{enumerate}
\end{ncprot}

To prove Lemma~\ref{lemma:ws2pro}, we need another lemma below to pave the road. Roughly we want to argue that if witness-search is hard, then given an oracle which is either the all-zero function or a function that marks the witness set $W_{pk}$, no efficient algorithms can distinguish them. This may be intuitively interpreted as a computational analogue of Grover search lower bound. Its proof can be found in Appendix~\ref{sec:pfs2ind}.

\begin{lemma}
Let $f$ be the all-zero function, and $f_S$ be the characteristic function of a set $S$. Namely $f_S(x) =1$ iff. $x\in S$. Define two experiments $G$ and $G'$ as below.  If $\ws(\sam)$ is hard, then for any efficient $\calA$ making $q\leq poly(n)$ queries, $\left| \Pr_{G}[b=1] - \Pr_{G'}[b=1] \right| \leq \negl(n)$. %$D(\rho,\rho') \leq \negl(n)$. 

%\begin{minipage}[t]{0.45\textwidth}
\begin{ncprot}{Experiment}{$G$}
\begin{enumerate}
	\item Generate $(pk,w,P)\gets \sam(1^n)$. %Ignore $w$. 
	\item $\calA$ is given $pk$ and (quantum) access to $f$. $\calA$ makes at most $q$ queries to $f$ and afterwards $w$ is given to $\calA$. It outputs $b\in\{0,1\}$ and aborts.\\ %Let $\rho$ be the final state of $\calA$.\\
\end{enumerate}
\end{ncprot}
%\end{minipage}
%\begin{minipage}[t]{0.45\textwidth}
\begin{ncprot}{Experiment}{$G'$}
\begin{enumerate}
	\item Generate $(pk,w,P)\gets \sam(1^n)$. Let $f_{pk}:=f_{W_{pk}}$, where $W_{pk} = \{w:P(w)=1\}$. (i.e., $f_{pk}(x) =1$ iff. $x\in W_{pk}$)
	\item $\calA$ is given $pk$ and (quantum) access to $f_{pk}$. $\calA$ makes at most $q$ queries to $f_{pk}$ and afterwards $w$ is given to $\calA$. It output $b\in\{0,1\}$ and aborts.\\ % Let $\rho'$ be the final state of $\calA$.
\end{enumerate}
\end{ncprot}
%\end{minipage}

\label{lemma:s2ind}
\end{lemma}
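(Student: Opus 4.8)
\textbf{Proof plan for Lemma~\ref{lemma:s2ind}.}
The plan is to convert a successful distinguisher between $G$ and $G'$ into a witness-search adversary, via the standard ``distinguishing-to-extraction'' paradigm in the quantum random-oracle model; concretely, I would invoke (a version of) Unruh's One-Way to Hiding (O2H) lemma~\cite{Unr14_rev}. Fix an efficient $\calA$ making $q = poly(n)$ queries and set $\adv := \left| \Pr_G[b=1] - \Pr_{G'}[b=1] \right|$. The two oracles handed to $\calA$, namely the all-zero function $f$ and the indicator $f_{pk} = f_{W_{pk}}$, are fixed functions that agree on every input outside $S := W_{pk}$ and differ on every input inside it. This is exactly the configuration to which O2H applies, with $S$ playing the role of the ``reprogrammed'' set.

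First I would spell out the extractor $\calB$ guaranteed by O2H: on input $pk$ received from the witness-search challenger, $\calB$ runs the query phase of $\calA(pk)$, answering each oracle query with the all-zero function $f$ (which needs no trapdoor and no knowledge of $W_{pk}$), stops just before a query index $i$ chosen uniformly in $\{1,\dots,q\}$, measures the input register of that query in the computational basis, and outputs the outcome $\hat w$. O2H then yields
\[
\adv \;\le\; 2q\,\sqrt{\Pr[\hat w \in W_{pk}]}\,.
\]
The crucial point is that $\calB$ never needs the witness $w$: in both $G$ and $G'$ the string $w$ is disclosed to $\calA$ only \emph{after} all $q$ oracle queries, so the query phase — the only part $\calB$ simulates — depends on $pk$ alone. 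Hence $\calB$ is a legitimate polynomial-time adversary for $\ws(\sam)$, and since $\hat w\in W_{pk}$ is precisely its winning condition, hardness of $\ws(\sam)$ forces $\Pr[\hat w\in W_{pk}] \le \negl(n)$. Substituting back, $\adv \le 2q\sqrt{\negl(n)} = \negl(n)$ because $q$ is polynomial, which is the claim.

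The main obstacle is really just ensuring that the extractor is simultaneously efficient and ``oracle-independent'' in the right way: choosing the all-zero function as the left oracle is what makes the simulation in $\calB$ trivial, and the fact that the $w$-reveal is placed after the query phase is what keeps $\calB$ from needing any secret. If one did not wish to assume O2H as a black box, the genuinely technical ingredient would be to reprove the underlying hybrid bound from scratch — switching $f_{pk}(x)$ to $0$ one query at a time and bounding the accumulated state disturbance in the BBBV style — which is precisely the ``computational analogue of the Grover search lower bound'' alluded to before the lemma; the reduction to $\ws(\sam)$ and the final conclusion are the same in either route.
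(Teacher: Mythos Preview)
Your proposal is correct and essentially mirrors the paper's own proof: the paper constructs the same extractor (simulate $\calA$ on $pk$ with the all-zero oracle, measure a uniformly chosen query), then bounds the distinguishing advantage by $4q\sqrt{P_B}$ via a direct BBBV argument together with conditioning on $pk$ and Jensen's inequality, and finishes by observing that $P_B\le\negl(n)$ because the extractor is a legal $\ws(\sam)$ adversary. Your use of O2H as a black box in place of the explicit BBBV computation is just a repackaging of that same step, as you yourself note at the end.
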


\begin{proof}[Proof of Lemma~\ref{lemma:ws2pro}]

We use a hybrid argument to prove the theorem. Define $E_i, i=1,\ldots, 4$ as follows. 
\begin{itemize}
	\item $E_1 := E$.  (${\calA_1^\calO}/{\calA_2^\calO}$ in short.)
	\item $E_2$: identical to $E_1$ except that in step 3, $\calO$ is replaced by $\bar \calO$ where $\bar \calO(y) = \calO(y)$ but $\bar \calO(y) = 0$ for any $y=\cdot\|w$ where $w\in W_{pk}$.  (${\calA_1^{\bar\calO}}/{\calA_2^\calO}$) \fsnote{$\perp$ or 0?}
	\item $E_3$: identical to $E_2$ except that  after step 3, we use $\calO'$ as defined in $E'$ instead of $\calO$. Observe that $E_3$ can also be obtained from $E'$ by substitute $\bar \calO$ for $\calO$ in step 3. (${\calA_1^{\bar\calO}}/{\calA_2^{\calO'}}$)
	\item $E_4: = E'$. (${\calA_1^{\calO}}/{\calA_2^{\calO'}}$)
\end{itemize}

Define $\adv_i:= \left| \Pr_{E_i}[b=1] - \Pr_{E_{i+1}}[b=1] \right| $. We will show that $\adv_1$ and $\adv_3$ are both negligible using Lemma~\ref{lemma:s2ind}. $\adv_2  = 0$ since in both $E_2$ and $E_3$, the function values for $W_{pk}$ are assigned uniformly at random and independent of anything else. Therefore we conclude that $\adv= \left| \Pr_{E}[b=1] - \Pr_{E'}[b=1] \right| \leq \sum \adv_i = \negl(n)$. 

We are only left to prove that $\adv_1 \leq \negl(n)$, and $\adv_3 \leq \negl(n)$ follows by similar argument. Suppose for contradiction that there exist $(\calA_1,\calA_2)$ such that $\adv_1\geq 1/{p(n)}$ for some polynomial $p(\cdot)$. We show that this will lead to a contradiction to Lemma~\ref{lemma:s2ind} that $|\Pr_G[b= 1] - \Pr_{G'}[b=1]|  \leq \negl(n)$, which in turn contradicts the hardness of witness-search. To see this, we construct an algorithm $D$ from $(\calA_1,\calA_2)$ that runs in $G$ and $G'$ such that $|\Pr_G[b= 1:D] - \Pr_{G'}[b =1:D]| \geq 1/{p(n)}$. Let $F$ be an oracle which ignores the first part of the input and then applies either all-zero function $f$ or $f_{pk}$ (as defined in $G'$) on the second part. Let $g$ be a random function. Define another oracle $H:=g\circ F$ that implements the following transformation: %if $F(x) = 0$, apply $h$, and if $F(x)= 1$ then 
\begin{align*}
	\ket{x,y}  \mapsto& \ket{x,y}\otimes\ket{0} \quad \text{append an auxiliary register}\\
	\mapsto & \ket{x,y}\otimes\ket{\overline{F(x)}} \quad \text{compute the negation of $F$ on aux.} \\ 
	\mapsto & \ket{x,y\oplus \overline{F(x)}\cdot g(x)}\otimes\ket{\overline{F(x)}} \quad \text{controlled-$g$} \\
	\mapsto & \ket{x,y\oplus \overline{F(x)}\cdot g(x)} \quad \text{uncompute negation of $F$ and disgard aux.}
\end{align*}
Observe that if $F$ is induced from $f$ then $H$ is identical to a random function $\calO$. Whereas if $F$ comes from $f_{pk}$ then $H$ is identical to $\bar \calO$ as in $E_2$. For an algorithm that queries at most $q$ times to $H$, we can sample $h$ from a family of $2q$-wise independent functions and simulate $H$ efficiently (with access to $F$) without any noticeable difference. 
\begin{ncprot}{Construction of}{$D$}
\begin{enumerate}
	\item $D$ receives $pk$ and an oracle $F$  (one of the two candidates above). 
	\item $D$ simulates oracle $H=g\circ F$ as defined above. $D$ then simulates $\calA_1$, for each of query from $\calA_1$, it is answered by $H$ with (two) oracle calls to $F$. Let $x$ be the output of $\calA_1$. 
	\item $D$ receives $w$ (from external challenger). It then simulates $\calA_2$ on input $(x,w,z:=H(x\| w))$ and oracle queries are answered by $h$. 
	\item $D$ outputs the output of $\calA_2$. \\
\end{enumerate}
\end{ncprot}

It is easy to see that if $F$ is induced from $f$, the view of $\calA_1$ and $\calA_2$ is identical to that of $E_1$. Likewise if $F$ is induced by $f_{pk}$ then it is the same view as in $E_2$. Therefore $|\Pr_G[b= 1:D] - \Pr_{G'}[b=1:D]| = |\Pr_{E_1}[b= 1:(\calA_1,\calA_2)] - \Pr_{E_2}[b=1: (\calA_1,\calA_2)]|   \geq 1/{p(n)}$. This gives a contradiction.  

\end{proof}

%=================%
\subsection{Proof of Theorem~\ref{thm:eu2su}}
\label{subsec:trans}
%=================%

\mypar{Brief Review of Classical Proof} Classical proof roughly goes as follows: consider a forger $\mathcal{A}$. If $(M^*, \sigma'^*)$ is the forgery that $\mathcal{A}$ eventually submits, we will let $C^* = h( \mathcal{O}(M^* \| \sigma^* ), r^* )$. Similarly, for a signing query made by the forger $M_i$, we let $C_i = h( \mathcal{O}(M_i \| \sigma_i ), r_i)$.

We then analyze two separate cases. First the instance where $C^* \neq C_i$ for all $i$. In this case we show that this gives a break to the existential unforgeability of the signature scheme $\Sigma$, by way of $(C^*, \sigma^*)$. Next, we examine the case where $C^* = C_i$ for some $i$. In this case we show that $(\mathcal{O}(M^*||\sigma^*), r^*)$ and $(\mathcal{O}(M_i||\sigma_i), r_i)$ provide a break to the collision resistance of the chameleon hash function.

For completeness the full classical proof is included in Appendix~\ref{app:cproof}. It is adapted from~\cite{TOO06} and we use a generic chameleon hash function instead of a concrete instantiation from the discrete logarithm problem. There are also changes which by our opinion make the proof easier to understand.  

\mypar{Proof in the quantum random-oracle model} Let $\mathcal{A}$ be the forger making at most $q$ queries, and let $\epsilon$ be the probability that $\mathcal{A}$ succeeds in her forgery. We construct $\mathcal{B}$ that either breaks existential unforgeablity of $\Sigma$ or can find collisions in $\calH$.

% the reduction, and $\mathcal{C}$ the challenger. $\mathcal{B}$ and $\mathcal{A}$ play a game of strong unforgeability with our transformed signature scheme, with $\mathcal{B}$ simulates a (quantum) random oracle. $\mathcal{B}$ will guess in which case he is, 1 or 2 (the cases are defined below). If he guesses he is in case 1, he will play a game of existential unforgeability on $\Sigma$ with $\mathcal{C}$. If he guesses he is in case 2, he will play a game of collision resistance on $h$ with $\mathcal{C}$. 
%
%. Then by analyzing the two cases, we show that the probability $\mathcal{B}$ succeeds in his game is at least $\frac{1}{2} \epsilon - \negl(n)$. Thus if our scheme is not secure in the quantum random-oracle model, i.e. $\epsilon\geq 1/{poly(n)}$, the chameleon hash function is not collision resistant or the signature scheme is not existentially unforgeable.

\begin{itemize}
\item \textbf{Case 1}: We define this case as occurring when $C^* \neq C_i$ for all $i$.

Firstly, $\mathcal{B}$ will be acting as a quantum random oracle for $\mathcal{C}$. To do this, $\mathcal{B}$ simply chooses a $2q$-wise independent hash function, $\mathcal{O}$, and for any query $\mathcal{A}$ makes, $\Sigma \alpha_{x,z} | x, z \rangle$, $\mathcal{B}$ responds with $\Sigma \alpha_{x,z} | x, \mathcal{O}( x ) \oplus z \rangle$.

\begin{ncprot}{Construction of Existential Forger}{$\calB$}
\begin{enumerate}
	\item $\calB$ receives a public key $pk$ from the challenger $\calC$
	\item $\calB$ simulates a variant of the strongly-unforgeable game with $\calA$:
	\begin{enumerate}[i)]
		\item $\calB$ generates $(h,td)\gets HG(1^n)$. Initiate $\calA$ with $pk'=(pk,h)$
		\item $\calB$ simulates a random-oracle using a $2q$-wise independent hash function. 
		\item On the $i$th signing query $M_i$ from $\calA$, $\calB$ chooses a random $C_i$. It  then signs $C_i$ by submitting it to $\calC$, obtaining $\sigma_i$. It computes $m_i = \calO(M_i||\sigma_i)$, and using the trapdoor information $td$, finds an $r_i$ such that $h(m_i,r_i) = C_i$. It sends $\sigma_i' = (\sigma_i, r_i)$ to $\calA$.
	\end{enumerate}
	\item Let $(M^*,(\sigma^*,r^*))$ be the final forgery produced by $\calA$. Output $(C^*, \sigma^*)$ as the forgery. 
\end{enumerate}
\end{ncprot}

From $\calA$'s point of view, a $2q$-wise independent function is identical to a random function~\cite{Zha12a}. Noting that $C^* \neq C_i$ for all $i$, and the $C_i$'s are precisely what was submitted to $\mathcal{C}$ for signing queries, and finally, seeing as this is a valid forgery, so $V( C^*, \sigma^*) = 'accept'$, we can see that $\mathcal{B}$ submits $(C^*, \sigma*)$ as a valid new forgery, breaking the existential unforgeability of $\Sigma$ and winning his game with $\mathcal{C}$. Thus in this case whenever $\mathcal{A}$ succeeds, so does $\mathcal{B}$, and so the probability $\mathcal{B}$ succeeds given we are in this case is $\epsilon$.

\item \textbf{Case 2}: This case is defined as occurring when $C^* = C_i$ for some $i$. In this case we will show a reduction to break the collision resistance of the chameleon hash function.

\begin{ncprot}{Construction of Collision-Finding Adversary}{$\calB$}
\begin{enumerate}
	\item $\calB$ receives $h$ from the challenger, which is sampled from the Chameleon hash function family. 
	\item $\calB$, playing the role of a challenger, simulates a variant of the strongly-unforgeable game with $\calA$:
	\begin{enumerate}[i)]
		\item $\calB$ generates $(pk,sk)\gets G(1^n)$. Initialize $\calA$ with $pk'=(pk,h)$. For $i = \{1,\ldots,q\}$, $\calB$ generates $m_i$ uniformly at random and $r_i\gets \calR$ (according to the specification of $h$). $\calB$ computes $C_i := h(m_i, r_i)$ and $\sigma_i := S(sk, C_i)$.
		\item $\calB$ simulates a random-oracle in the usual way (i.e. $t$-wise independent hash function). 
		\item On the $i$th signing query $M_i$ from $\calA$, $\calB$ reprograms the random-oracle: $\calO(M_i\|\sigma_i) \gets m_i$ and returns $(\sigma_i,r_i)$ to $\calA$. 
	\end{enumerate}
	\item Let $(M^*,(\sigma^*,r^*))$ be the final forgery produced by $\calA$. We know $C^* = C_i$ for some $i$. Output $(\mathcal{O}(M^*||\sigma^*), r^*), (\mathcal{O}(M_i||\sigma_i), r_i)$ as the collision.  
\end{enumerate}
\end{ncprot}

It is easy to see that $\calB$ finds a valid collision as long as $\calA$ produces a valid forgery, with overwhelming probability. This is because if $C^* = C_i$, then $h(\mathcal{O}(M^*||\sigma^*), r^*) = h(\mathcal{O}(M_i||\sigma_i),r_i)$. We simply need to ensure that this is not a trivial collision. Note that since this must be a new forgery, $(M^*, \sigma^*, r^*) \neq (M_i, \sigma_i, r_i)$. If $r^* \neq r_i$, we are done. Otherwise, we can see that $M^*||\sigma^* \neq M_i||\sigma_i$, and thus since the values for $\calO(M_i||\sigma_i)$ were chosen uniformly at random, $\calO(M^*||\sigma^*) \neq \calO(M_i||\sigma_i)$ with overwhelming probability.

Therefore if we let $\evt$ be the event  that $\calA$ produces a valid forgery,  we only need to show that $\evt$ occurs with probability $\Omega(\veps)$  in the construction of $\calB$. We prove it by a hybrid argument which transforms the standard strongly unforgeable game into the variant as in the construction of $\calB$. We will show that the probablity of $\evt$ is esstially preserved in the hybrid argument. 

Let $\hyd_0$ the standard strongly-unforgeable game with $\calA$. By hypothesis $\Pr[\evt:\hyd_0]\geq \veps$. Consider the first hybrid $\hyd_1$ that makes only one change to $\hyd_0$: when the challenger answers a signing query, instead of querying the random-oracle $\calO$ to obtain $m_i:=\calO(M_i \|\sigma_i)$, it samples a random $m_i$ and programs the random oracle so that $\calO(M_i \| \sigma)= m_i$. Note that in particular the challenger still uses the trapdoor to find $r_i \gets h^{-1}(C_i,m_i)$. By Lemma~\ref{lemma:ws2pro}, we claim that\footnote{More precisely, we need to introduce sub-hybrids and each sub-hybrid makes such a change for just one signing query.} $\Pr[\evt:\hyd_0] - \Pr[\evt:\hyd_1]| \leq \negl(n)$. Specifically we instantiate $\sam$ as follows. $pk$ will consists of a public key for $\Sigma$, hash function $h$, and random messages $C_i$. $P$ will be the verification algorithm of $\Sigma$. $w:=\sigma_i=S(sk,C_i)$ is the signature generated by $\calB$ in 2.i), and $W_{pk}$ consists of all strings that form a valid signature of $C_i$ under $\Sigma$. $\ws(\sam)$ is hard because $\Sigma$ is existential-unforgeable. 

%For the ease of exposition, we assume $\calA$ makes only one signing query. The general case can be handled by a standard hybrid argument, by  

$\hyd_2$ is obtained by a small change in $\hyd_1$. Instead of sampling a random $C_i$, it is obtained by computing $h(m_i,r_i)$ from random $(m_i,r_i)$. 
%Then $(\sigma_i,r_i)$ are returned to $\calA$.
This change only causes (statistically) a negligible error. This is because if $h\gets \calH$ and $r_i\gets \calR$ then $C_i:=h(m_i,r_i)$ will be uniformly random by the uniformity property of $\calH$. In addition the chameleon property of $\calH$ tells us that $r_i\gets h^{-1}_{td}(C_i,m_i)$ is distributed statistically close to sampling $r_i\gets \calR$. Therefore the order of generating $C_i$ and $r_i$ does not matter. 

Thus we see that $\calB$ is able to break the collision-resistance property of the Chameleon hash function. 

\fsnote{It seems that the starting EU scheme doesn't need to be secure against adaptive chosen message attacks. Note that we generate $c_i$ and obtain $\sigma_i$ all in advance,  so maybe it suffices to have weak security under static attack in which the adversary non-adaptively makes all its queries before seeing the public key. (Actually CHKP only proved static security of their Bonsai tree scheme and referred to a standard transformation that amplifies static security to adaptive security using Chameleon hash [KR00]. It's not surprising that, additionally with RO, TOO transformation directly gets adaptive-SU from static-EU.)}
\end{itemize}

In sum, we have shown that if there is an adversary $\calA$ breaking $\Sigma'$, then there is an adversary who manages to break either the collision resistance of the chameleon hash function $\calH$, or the existential unforgeability of the original signature scheme $\Sigma$ with probablity $\Omega(\veps)$. This contradicts the security of $\Sigma$ and $\calH$ if $\veps\geq 1/{poly(n)}$. Thus we conclude that Theorem~\ref{thm:eu2su} holds. 

%=============================%
\section{Discussion}
\label{sec:appdis}
%=============================%

\mypar{Obtaining a quantum-safe $\sucma$ signature scheme} In~\cite{CHKP12}, the authors presented a scheme for generating chameleon hash functions, based off the short integer solution problem for lattices. They also demonstrate a reduction showing an efficient algorithm to break the collision resistance of the hash function implies an efficient algorithm to break the short integer solution problem for lattices. Using results from ~\cite{Son14} this reduction can be shown to carry through to the quantum setting. As this problem is currently believed to be hard even for quantum computers, these chameleon hash functions' collision resistance remains even when faced with a quantum adversary. This chameleon hash function scheme can therefore be used in the transformation in this paper to get a quantum-secure transformation. This transformation, used with any quantum-safe $\eucma$ signature scheme will give a quantum-safe $\sucma$ scheme in the quantum random-oracle model.

When implementing the scheme with the chameleon hash function from \cite{CHKP12} we can see what the overhead would be in an actual realization. Let $n \geq 1, q \geq 2$, and $m = O(n \log q) $. Let $k$ be the output length of the hash function. Then the public key, $pk'$ will now carry with it a $\mathbb{Z}_q^{n \times m}$ matrix, so $| pk'| = |pk| + n(k+m)$. The secret key now includes a specialized lattice basis, which can be written as an $m \times m$ matrix over $\mathbb{Z}_q$, giving us $|sk'| = |sk| + m^2$. Finally, the signature overhead is the inclusion of a vector in $\mathbb{Z}_q^m$, so $|\sigma'| = |\sigma| + m$.

A signature scheme based off the Short Integer Solution problem for lattices is also presented in \cite{CHKP12}. Examining the proof presented there with tools from \cite{Son14}, we can see that this signature scheme is quantum-safe $\eucma$. Applying this transformation to this scheme, we obtain a quantum-safe $\sucma$~signature scheme. In fact, we can show that the reduction shown in \cite{CHKP12} is not as tight as it could be, and for a message of length $k$ and at most $Q$ queries, we can show that for adversary $\mathcal{F}$ and reduction $\mathcal{S}$, we have that $\adv_{SIS}(\mathcal{S}^{\mathcal{F}}) \geq \adv(\mathcal{F})^{\eucma}_{SIG} / (Q(k- \log Q))$. This is a small improvement over the result of the paper, showing that $\adv_{SIS}(\mathcal{S}^{\mathcal{F}}) \geq \adv(\mathcal{F})^{\eucma}_{SIG} / (Q(k-1)+1)$

\ifnum\final=0
\mypar{Quantum-Secure SU based on lattice problems} 
\fi

\mypar{Future directions} Our work has studied a very specific transformation that gives a systematic way of getting quantum-safe $\sucma$ signatures. There are a few more transformations in the plain model (i.e. without a random-oracle)~\cite{SPW06,LKZW08,HWZ07,HWLZ08}. We conjecture that they also hold against quantum adversaries. If this is the case, it will be meaningful to evaluate all these transformations and figure out which one is preferable under specific applications. On the other hand, we chose the Bonsai-tree signature scheme~\cite{CHKP12} to instantiate the $\too$ transformation. % to get a quantum-safe $\sucma$ signature scheme  
%because we want to compare with its strongly unforgeable variant proposed by R{\"u}ckert~\cite{Ruc10}. 
There are many recent improvements on lattice-based signatures in terms of key size and computational efficiency~\cite{Boy10,MP12,DM14}, which are shown to be $\eucma$ classically. If they can be shown to be quantum-safe, they we can get more efficient quantum-safe $\sucma$ schemes in the quantum random-oracle model.

\section*{Acknowledgements}  
The authors are grateful to Andrew Childs for helpful discussions. EE was supported by NSERC on an undergraduate research award at the Institute for Quantum Computing, University of Waterloo.  FS acknowledges support from NSERC, CryptoWorks21, ORF and US ARO.

%=============================%
%\subsection{}
%=============================%

%===============================%
%\subsection{}
%\label{ssec:ctrans}
%===============================%

\ifnum\final=0
\newpage
\fi

%=========Bib=========%

\ifnum\lip=0
\bibliographystyle{alpha}
\fi
\bibliography{sign}

%=========Appendix============%
\appendix

%======================%
\section{Classical Proof}
\label{app:cproof}
%======================%

Let $\mathcal{A}$ be the forger, $\mathcal{B}$ the reduction, and $\mathcal{C}$ be the challenger. In each case, $\mathcal{B}$ and $\mathcal{A}$ will be playing a game of strong unforgeability. Let the probability that $\mathcal{A}$ succeeds be $\epsilon$. In Case 1, $\mathcal{C}$ and $\mathcal{B}$ will play a game of existential unforgeability on the signature scheme $\sigma$. In case 2, $\mathcal{C}$ and $\mathcal{B}$ will play a game of collision resistance on the chameleon hash function $h$. We show that if the probability $\mathcal{A}$ succeeds in her forgery is $\epsilon$, then the probability that $\mathcal{B}$ succeeds is $\geq \frac{1}{2} \epsilon - \negl(n)$. At the beginning of the reduction, $\mathcal{B}$ will flip a coin, and guess which case the adversary's forgery will fall under. Clearly, $\mathcal{B}$ will be correct with probability $\frac{1}{2}$. 

In our reduction, let the forgery that $\mathcal{A}$ eventually submits be $(M^*, \sigma'^*= (\sigma^*, r^*))$ Let $C^* = h( \mathcal{O}(M^* || \sigma^* ), r^* )$. Similarly, for each $M_i$ the forger submits to the signing oracle for signing, there is an associated $\sigma'_i$ and $C_i$.

\begin{itemize}
\item \textbf{Case 1}: $C^{*} \neq C_{i}$ for all $i$. We show that whenever the forger succeeds in creating a valid forgery of this type, the reduction succeeds in breaking the existential unforgeability of the original scheme $\Sigma = ( G, S, V)$.

 $\mathcal{C}$ and $\mathcal{B}$ will be playing a game of existential unforgeability, while $\mathcal{B}$ and $\mathcal{A}$ will be playing a game of strong unforgeability. We will show that whenever $\mathcal{A}$ wins her game, $\mathcal{B}$ wins his (so long as the forgery is of the type described above).

The games will play out as follows:

Firstly, $\mathcal{B}$ will act as the random oracle for $\mathcal{A}$. In the first case at least (and this will change only slightly case to case), he can do this in the following way. Whenever $\mathcal{A}$ queries the random oracle with a query, $\mathcal{B}$ looks up in a maintained table if that query has been made before. If it has, he responds with the value he responded with before. If it has not, he generates a random number and responds with that.

Now we discuss how the game of strong unforgeability transpires.

$\mathcal{C}$ sends $\mathcal{B}$ a public key $pk$ from the $\Sigma$ scheme. $\mathcal{B}$ will generate a chameleon hash function $h$, (with corresponding trapdoor $td$) and send the public key and hash function to $\mathcal{A}$ as $pk'=(pk,h)$.

$\mathcal{A}$ will start submitting messages $M_i$ to $\mathcal{B}$ for signing. For each query, $\mathcal{B}$ does the following:

\begin{itemize}
\item Choose a random $\tilde{m}_i$ and $\tilde{r}_i$ and compute $C_i = H( \tilde{m}_i, \tilde{r}_i )$
\item Sign $C_i$ by submitting it to $\mathcal{C}$ as a signing query, obtaining $\sigma_i$
\item Query $M_i || \sigma_i$ to the random oracle, obtaining $m_i = \mathcal{O}( M_i || \sigma_i )$
\item Using the trapdoor information $td$, find an $r_i$ such that $h(m_i , r_i ) = C_i$.
\item $\sigma'_i = ( \sigma_i , r_i)$
\item Send $\sigma'_i$ to $\mathcal{A}$
\end{itemize}

Eventually, $\mathcal{A}$ will submit a valid forgery $M^*, \sigma'^* = (\sigma^*, r^*)$.

Then, $\mathcal{B}$ takes these, and computes $C^* = h( \mathcal{O}(M^* || \sigma^*), r^* )$.

Noting that $C^* \neq C_i$ for all $i$, and the $C_i$'s are precisely what was submitted to $\mathcal{C}$ for signing queries, and finally, seeing as this is a valid forgery, so $V( C^*, \sigma^*) = 'accept'$, we can see that $\mathcal{B}$ submits $C^*, \sigma*$ as a valid new forgery, breaking the existential unforgeability of $\Sigma$ and winning his game with $\mathcal{C}$.

Thus in this case whenever $\mathcal{A}$ succeeds, so does $\mathcal{B}$, and so the probability $\mathcal{B}$ succeeds given we are in this case is $\epsilon$.

\item \textbf{Case 2}: This case is defined as occurring when $C^* = C_i$ for some $i$. In this case we will show a reduction to break the collision resistance of the chameleon hash function.

To start with, $\mathcal{C}$ sends $\mathcal{B}$ the description of a chameleon hash function $h$, which $\mathcal{B}$ will find a collision for.

$\mathcal{B}$ then runs the key generation algorithm of the signature scheme $\Sigma$, obtaining $(pk,sk)$. He then sends $pk' = (pk, h)$ to $\mathcal{A}$.

For each signing query $M_i$ that $\mathcal{A}$ sends to $\mathcal{B}$, $\mathcal{B}$ does the following:

\begin{itemize}
\item Choose a random $m_i$ and $r_i$ and compute $C = h(m_i, r_i)$
\item Sign $C_i$ using the signing algorithm $S$, obtaining $\sigma = S(C, sk)$
\item Reprogram the random oracle so that $\mathcal{O}(M_i || \sigma_i ) = m_i$.
\item $\sigma'_i = (\sigma_i, r_i)$
\item Send $\sigma'_i$ to $\mathcal{A}$.
\end{itemize}

Note that we have now permitted $\mathcal{B}$ to reprogram the random oracle for the purposes of this proof. Thus it is necessary to show that $\mathcal{A}$ will still output a valid forgery.

When $\mathcal{A}$ eventually submits her forgery, $(M^*, \sigma^*)$, we can see that $C^* = C_i$ for some $i$. This implies that $h( \mathcal{O}(M_i || \sigma_i ), r_i ) = h( \mathcal{O}(M^* || \sigma^* ) , r^* )$ for that $i$. This shows us a collision for the chameleon hash function $h$, which is what $\mathcal{B}$ is looking for. But we must take care to ensure that it isn't a trivial collision.

Note that $(M_i, \sigma_i , r_i ) \neq ( M^*, \sigma^*, r^* )$, simply because both the message and signature of the forgery can't be the same as that of one of the $M_i$'s. So at least one of these values is different. 

If $r_i \neq r^*$, we are done. Otherwise, it must be the case that $M^* || \sigma^* \neq M_i || \sigma_i $. In this case, since the values for the random oracle are chosen uniformly at random, with overwhelming probability, $\mathcal{O}( M^* || \sigma^* ) \neq \mathcal{O}( M_i || \sigma_i )$, giving $\mathcal{B}$ a collision for $h$.

So in this case, $\mathcal{B}$ will succeed as long as $\mathcal{A}$ does up to a negligible probability by Lemma~\ref{lemma:cind}. So the probability $\mathcal{B}$ succeeds is $\geq \epsilon - \negl(n)$
\end{itemize}

\begin{lemma}
For a forger $\mathcal{A}$, let $\mathcal{B}_1$ and $\mathcal{B}_2$ be as below, and have them play a game of strong unforgeability with $\mathcal{A}$. Then $$| Pr_{\mathcal{B}_1}( \mathcal{A} \text{ wins}) - Pr_{\mathcal{B}_2}( \mathcal{A} \text{ wins}) | \leq \negl(n),$$ as long as the underlying signature scheme is existentially unforgeable.
\label{lemma:cind}
\end{lemma}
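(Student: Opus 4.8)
The plan is to exhibit $\mathcal{B}_1$ and $\mathcal{B}_2$ as the two endpoints of a hybrid sequence, where $\mathcal{B}_1$ is the reduction that answers a signing query $M_i$ by first computing $m_i=\calO(M_i\|\sigma_i)$ \emph{honestly} from the random oracle and then using the trapdoor to find $r_i$ with $h(m_i,r_i)=C_i$, and $\mathcal{B}_2$ is the reduction that instead picks $m_i$ fresh at random and \emph{reprograms} $\calO(M_i\|\sigma_i):=m_i$. (Both still produce the same external behaviour towards the collision challenger; the only difference is whether the random oracle is reprogrammed to make the signatures consistent.) I would first reduce the multi-query case to the single-query case by a standard sequence of sub-hybrids $\mathcal{B}_1=\mathcal{H}^{(0)},\mathcal{H}^{(1)},\dots,\mathcal{H}^{(q)}=\mathcal{B}_2$, where $\mathcal{H}^{(j)}$ reprograms the oracle for the first $j$ signing queries and answers the remaining ones honestly; since $q=\poly(n)$, it suffices to bound $|\Pr_{\mathcal{H}^{(j)}}[\mathcal{A}\text{ wins}]-\Pr_{\mathcal{H}^{(j+1)}}[\mathcal{A}\text{ wins}]|$ by a negligible quantity for each $j$.

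The single-step bound is exactly an instance of Lemma~\ref{lemma:ws2pro}. To see this, I would instantiate the sampler $\sam$ of the witness-search game so that $pk$ output by $\sam$ consists of the public key $pk$ of $\Sigma$, the chameleon hash function $h$, and the random image $C_{j+1}$ that the reduction uses for the $(j{+}1)$-st signing query; the predicate $P$ is the verification algorithm of $\Sigma$ with respect to the message $C_{j+1}$, and the witness $w:=\sigma_{j+1}=S(sk,C_{j+1})$ is a valid signature of $C_{j+1}$ under $\Sigma$, so $W_{pk}$ is the set of all valid signatures of $C_{j+1}$. Because $\Sigma$ is existential-unforgeable, an adversary given only $pk$ (hence only $C_{j+1}$, not $\sigma_{j+1}$) cannot output any element of $W_{pk}$ except with negligible probability, i.e.\ $\ws(\sam)$ is hard. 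Now experiment $E$ of Lemma~\ref{lemma:ws2pro} matches $\mathcal{H}^{(j)}$ (the oracle is used as-is, and the signature $w=\sigma_{j+1}$ together with $z:=\calO(x\|w)$ where $x:=M_{j+1}$ is what the reduction needs to assemble $\sigma'_{j+1}$), while $E'$ matches $\mathcal{H}^{(j+1)}$ (the oracle value at $x\|w$ is replaced by a fresh random $z$). The roles of $\mathcal{A}_1$ and $\mathcal{A}_2$ are played by $\mathcal{A}$ before and after the $(j{+}1)$-st signing query, with the reduction's bookkeeping folded in; the final bit $b$ is set to $1$ iff $\mathcal{A}$ eventually outputs a valid strong forgery. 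Lemma~\ref{lemma:ws2pro} then gives $|\Pr_{\mathcal{H}^{(j)}}[b{=}1]-\Pr_{\mathcal{H}^{(j+1)}}[b{=}1]|\le\negl(n)$, and summing over the $q$ sub-hybrids yields the claim.

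The main obstacle is making the correspondence between $(E,E')$ and the sub-hybrids fully rigorous: Lemma~\ref{lemma:ws2pro} is stated for a single reprogrammed point chosen right after $\mathcal{A}_1$ halts, whereas in $\mathcal{H}^{(j)}$ the reprogramming happens in the middle of an interactive game with further signing queries and random-oracle queries still to come. I would handle this by absorbing everything after the $(j{+}1)$-st signing query — including the honest answering of later signing queries via the trapdoor, and the reduction's final collision extraction — into the algorithm $\mathcal{A}_2$ of the lemma, and everything before it into $\mathcal{A}_1$; one must check that $\mathcal{A}_2$'s extra inputs $(x,w,z)$ supply exactly the data ($M_{j+1}$, $\sigma_{j+1}$, and $m_{j+1}=\calO(M_{j+1}\|\sigma_{j+1})$) needed to continue the simulation, and that the reprogrammed oracle $\calO'$ in $E'$ is precisely the oracle $\mathcal{H}^{(j+1)}$ presents to $\mathcal{A}$. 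Once this plumbing is verified, the rest is routine. (Note also that the $2q$-wise — or more generally $t$-wise — independent function used to simulate the oracle is indistinguishable from a truly random one against $q$ quantum queries~\cite{Zha12a}, so nothing is lost by working with the idealized $\calO$ of Lemma~\ref{lemma:ws2pro}.)
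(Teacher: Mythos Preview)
Your approach is correct but takes a different route from the paper. Lemma~\ref{lemma:cind} lives in the \emph{classical} random-oracle appendix, and the paper dispatches it with the standard classical trick: since $\mathcal{B}$ maintains the list of $\mathcal{A}$'s oracle queries, the only way $\mathcal{A}$ can notice the reprogramming at $M_i\|\sigma_i$ is to have already queried that exact point before receiving $\sigma_i$; an $\eucma$-reduction then scans the query list, extracts a valid signature on $C_i$, and wins. No hybrids and no Lemma~\ref{lemma:ws2pro} are invoked.

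What you have written is, almost verbatim, the paper's \emph{quantum} argument for the analogous step (the passage from $\hyd_0$ to $\hyd_1$ in Case~2 of the proof of Theorem~\ref{thm:eu2su}). There Lemma~\ref{lemma:ws2pro} is genuinely needed because query-list inspection is unavailable against superposition queries. Applying that lemma to the classical statement works too, of course, but it is heavier machinery than necessary, and the paper reserves it for the quantum setting. One detail you should tighten if you keep this route: for $\mathcal{A}_1$ and $\mathcal{A}_2$ to simulate the \emph{other} signing queries without holding $sk$, the public information produced by $\sam$ must also contain the remaining $C_i$'s together with their signatures $\sigma_i$ for $i\neq j{+}1$; witness-search then remains hard by $\eucma$ security against $q{-}1$ (non-adaptive) queries on random messages. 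This is precisely the ``plumbing'' you flag, and it is not automatic from the instantiation you stated.
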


$\mathcal{B}_1$ is defined to operate exactly as the transformation dictates. $\mathcal{B}_2$ will operate as $\mathcal{B}$ was defined to in Case 2 above.
\begin{proof}
Say the difference in probability that $\mathcal{A}$ wins was not negligible. As the distribution of all values is the same, the only difference from $\mathcal{A}$'s perspective was that the value of $\mathcal{O}( M_i || \sigma_i )$ was changed for each $i$.

But clearly the only way to have the information that they changed is if $\mathcal{A}$ had already queried $\mathcal{O}(M_i || \sigma_i)$. But if $\mathcal{A}$ does this with non-negligible probability, then we could construct a reduction to break the existential forgeability of the signature scheme by playing strong unforgeability with $\mathcal{A}$, and before submitting each $C_i$ to the signing oracle, checking to see if $\mathcal{A}$ had queried $M_i || \sigma_i$ to the random oracle. With non-negligible probability, the reduction finds a $\sigma_i$ that is a valid forgery. So he submits this along with $C_i$ and has broken the existential unforgeability of the scheme.
\end{proof}

Therefore in both cases, as long as $\mathcal{B}$ successfully guesses which case the forgery will fall under, he manages to successfully break either the collision resistance of the chameleon hash function $h$, or the existential unforgeability of the original signature scheme $\Sigma$. Since $\mathcal{B}$ correctly guesses what case he is in half of the time, his probability of success is $\geq \frac{1}{2} \epsilon - \negl(n)$.

%======================%
\section{Proof of Lemma~\ref{lemma:s2ind}}
\label{sec:pfs2ind}

%======================%
%We first show a lemma that turns the hardness of witness search into indistinguishability of two random experiments. 

\begin{proof}
Let $\calA$ be an arbitrary algorithm running in $G$ (or $G'$). Consider another algorithm $B$ that runs in an experiment $EXT$ as follows: 

\begin{ncprot}{Extraction Experiment}{$EXT$}
\begin{enumerate}
	\item Generate $(pk,w,P)\gets \sam(1^n)$. Ignore $w$. 
	\item $B$ receives $pk$ and picks $j\in_R \{1,\ldots,q\}$ at random. 
	\item $B$ simulates $\calA$ on $pk$ and (quantum) access to $f$. Just before $\calA$ making the $j$th query to $f$, $B$ measures the register that contains $\calA$'s query. Let $z$ be the measurement outcome. \\
\end{enumerate}
\end{ncprot}
Let $p_B:=\Pr_{EXT}[z\in W_{pk}]$ be the probability that the output of $E$ is a valid witness. Let $\epsilon := \left| \Pr_{G}[b=1] - \Pr_{G'}[b=1] \right|$. In both experiment $G$ and $G'$, $pk$ is selected at random according to $\sam$. Let $P_{pk}$ be the probability that $pk$ is outputted. Then

\begin{align*}
\epsilon = & \left| \Pr_{G}[b=1] - \Pr_{G'}[b=1] \right| \\
= & \left| \sum_{pk} \Pr_{G}[b=1 | pk ] \cdot P_{pk} - \sum_{pk} \Pr_{G'} [b = 1 | pk] \cdot P_{pk} \right| \\
= & \sum_{pk} P_{pk} \left| \Pr_{G} [b=1 | pk ] - \Pr_{G'} [b = 1 | pk ] \right| \, .
\end{align*}
Let $\epsilon_{pk} := \left|\Pr_{G} [b=1 | pk ] - \Pr_{G'} [b = 1 | pk ] \right|$. Let $|\phi_i \rangle$ be the superposition of $\calA^G$ on input $pk$ when the $i$'th query is made. Then let $q_y(|\phi_i \rangle )$ be the sum of squared magnitudes in $\calA$ querying the oracle on the string $y$. 

Let $S = [ q] \times W_{pk}$. Let $\delta_{pk} = \sum_{(i,y) \in S} q_y (| \phi^{pk}_i \rangle )$. We employ a theorem by Bennet et al.~\cite{BBBV97}, that states that $\| |\phi^{pk}_i \rangle - | \tilde\phi^{pk}_i  \rangle \| \leq \sqrt{q \cdot \delta_{pk} }$. (Here $|\tilde\phi^{pk}_i  \rangle$ is defined in the same way as $| \phi^{pk}_i \rangle$ but with $G'$ rather than $G$).

The same paper~\cite{BBBV97} also bounds the probability of being able to distinguish the two states, which corresponds to our probability of distinguishing the two experiments, $\epsilon_{pk}$, telling us that $$\epsilon_{pk} \leq 4 \cdot \left\| |\phi^{pk}_i \rangle - | \tilde\phi^{pk}_i \rangle \right\| \leq 4 \sqrt{q \cdot \delta_{pk}}\, .$$

Now note that $P_B^{pk}$ (that is, the probability that $EXT$ outputs a valid witness given $pk$ is chosen) can be written as 

\begin{align*}
P_B^{pk} = & \sum_{i \in [0, q]} \left( \Pr[\text{$i$ chosen}] \cdot \sum_{(j,y) \in S : j=i} q_y ( | \phi^{pk}_j \rangle ) \right)\\
= & \frac{1}{q} \sum_{i \in [0,q]} \sum_{(j,y) \in S: j = i} q_y ( | \phi^{pk}_j \rangle ) \\
= & \frac{1}{q} \sum_{(i, y) \in S} q_y ( | \phi^{pk}_i \rangle ) = \frac{1}{q} \delta_{pk}
\end{align*}

So we can see that $\epsilon_{pk} \leq 4 q \sqrt{ P_B^{pk}}$. Then $$\epsilon = \sum_{pk} P_{pk} \epsilon_{pk} \leq 4q \sum_{pk} P_{pk} \sqrt{ P_B^{pk} } \stackrel{(*)}{\leq}  4q \sqrt{ \sum_{pk} P_{pk} P_B^{pk} } =  4q \sqrt{ P_B}\, ,$$
where (*) applies Jensen's inequality. Finally,  notice that $B$ can be viewed as an adversary in the witness-search game $\ws(\sam)$. Therefore, we conclude that $p_B\leq \negl(n)$ by the hypothesis that $\ws(\sam)$ is hard and hence $\left| \Pr_{G}[b=1] - \Pr_{G'}[b=1] \right|\leq \negl(n)$. 
\end{proof}

\end{document}